\newtheorem{theorem}{Theorem}
\newtheorem{prop}{Proposition}
\newtheorem{corollary}{Corollary}
\theoremstyle{remark}
\newtheorem{remark}{Remark}
\newcommand{\euler}{\phi}
\newcommand{\pell}{{\mathcal H}}
\newcommand{\para}{{\mathcal P}}
\newcommand{\ZZ}{{\mathbb Z}}
\newcommand{\KK}{{\mathbb K}}
\newcommand{\iso}{{\Phi}}
\DeclareMathOperator{\lcm}{lcm}
\journal{Journal of \LaTeX\ Templates}
\begin{document}

\begin{frontmatter}

\title{A multifactor RSA-like scheme with fast decryption 
based on R\'edei rational functions
over the Pell hyperbola}

\author{Emanuele Bellini}
\address{DarkMatter LLC, UAE}

\ead{eemanuele.bellini@gmail.com}



\author{Nadir Murru}
\address{University of Turin, Italy}

\ead{nadir.murru@unito.it}

\begin{abstract}
We propose a generalization of an RSA-like scheme 
based on R\'edei rational functions over the Pell hyperbola. 
Instead of a modulus which is a product of two primes,
we define the scheme on a multi-factor modulus, 
i.e. on a product of more than two primes.
This results in a scheme with a decryption which is quadratically faster, 
in the number of primes factoring the modulus, than the original RSA, 
while preserving a better security. 
The scheme reaches its best efficiency advantage over RSA 
for high security levels, 
since in these cases the modulus can contain more primes.
Compared to the analog schemes based on elliptic curves, 
as the KMOV cryptosystem, the proposed scheme is more efficient.
Furthermore a variation of the scheme with larger ciphertext size 
does not suffer of impossible group operation attacks, 
as it happens for schemes based on elliptic curves.
\end{abstract}

\begin{keyword}
RSA-like cryptosystem \sep
multifactor RSA \sep
multiprime RSA \sep
R\'edei rational functions \sep 
Pell equation \sep
fast decryption
\end{keyword}

\end{frontmatter}


\section{Introduction}\label{sec:intro}
%
%
RSA is the most widespread asymmetric encryption scheme.
Its security is based on the fact that the trapdoor function 
$\tau_{N,e}(x) = x^e \mod N$, with $N=pq$ product of two large prime integers, and 
$e$ an invertible element in $\mathbb{Z}_{\euler(N)}$ ($\euler(N)$ being the Euler totient function), 
cannot be inverted by a polynomial-time in $\log N$ algorithm without knowing 
either the integers $p$, $q$, $\euler(N)$ or the inverse $d$ of $e$ modulo $\euler(N)$.
Thus the pair $(N,e)$, called the public key, is known to everyone, 
while the triple $(p,q,d)$, called the secret key, 
is only known to the receiver of an encrypted message.\\
Both encryption and decryption are performed through an exponentiation modulo $N$.
Precisely, the ciphertext $C$ is obtained as $C=M^e \pmod N$, and
the original message $M$ is obtained with the exponentiation $M = C^d \pmod N$.
While usually the encryption exponent is chosen to be small, 
the decryption exponent is about the size of $N$,
implying much slower performances during decryption with respect to encryption.\\
Through the years many proposal have been presented trying to speed up the decryption process.
In this work we present the fastest, to the authors knowledge,
of such decryption algorithms whose security is based on the factorization problem. \\
The presented scheme exploits different properties of R\'edei rational functions, 
which are classical functions in number theory.
The proposed decryption algorithm is quadratically, on the number of primes composing the modulus $N$, faster than RSA.

\vspace{5mm}

%
%
The work is divided as follows.
In Section \ref{sec:related_work} an overview of the main 
schemes based on the factorization problem which successfully improved RSA decryption step is presented.
In Section \ref{sec:product} the main theoretical results underlying our scheme are described. 
Section \ref{sec:critto} is devoted to the presentation of the cryptographic scheme, and in
Section \ref{sec:security} and \ref{sec:efficiency} 
its security and efficiency are discussed, respectively.
Section \ref{sec:conclusion} concludes the work.
\section{Related work}
\label{sec:related_work}
In this section we briefly overview
the main cryptographic schemes based on the factorization problem 
that have been introduced in order to improve RSA decryption step.\\
Usually, the general technique to speed up the RSA decryption step $C=M^e \pmod N$ 
is to compute the exponentiation modulo each factor of $N$ and 
then obtain $N$ using the Chinese Remainder Theorem.
\subsection{Multifactor RSA}
There exists variants of RSA scheme which exploit a modulus with more than 2 factors 
to achieve a faster decryption algorithm.
This variants are sometimes called Multifactor RSA (\cite{boneh2002fast}), 
or Multiprime RSA (\cite{ciet2002short}, \cite{compaq2000multiprime}).
The first proposal exploiting a modulus of the form $N=p_1 p_2 p_3$ 
has been patented by Compaq (\cite{compaq2000multiprime}, \cite{collins1998public}) in 1997.
About at the same time Takagi \cite{takagi1998fast} proposed an even faster solution using the modulus  $N=p^r q$,
for which the exponentiation modulo $p^r$ is computed using the Hensel lifting method \cite[p.137]{cohen2013course}.
Later, this solution has been generalized to the modulus $N=p^r q^s$ \cite{lim2000generalized}.\\
According to \cite{compaq2000multiprime} the appropriate number of primes to be chosen 
in order to resist state-of-the-art factorization algorithms 
depends from the modulus size, and, precisely, it can be:
up to 3 primes for 1024, 1536, 2048, 2560, 3072, and 3584 bit modulus, up to 4 for 4096, and up to 5 for 8192.
\subsection{RSA-like schemes}
Another solution which allows to obtain even faster decryption is to use RSA-like schemes based on isomorphism as 
\cite{koyama1992new}, \cite{koyama1995fast}, \cite{padhye2006public}, \cite{bellini2016}. 
As an additional property, these schemes owns better security properties with respect to RSA,
avoiding small exponent attacks to either $d$ (\cite{wiener1990cryptanalysis}) 
or $e$ (\cite{coppersmith1996low}, \cite{coppersmith1997small}),
and vulnerabilities which appear when switching from one-to-one communication scenario 
to broadcast scenario (e.g., see \cite{hastad1986n}).\\
The aforementioned schemes are based on isomorphism between two groups, 
one of which is the set of points over a curve, usually a cubic or a conic.
A complete overview on RSA-like schemes based on conics can be found in \cite{bellini2016}.
In general, schemes based on cubic curves have a computationally more expensive addition operation 
compared to schemes based on conic equations.
\subsection{Generalizing RSA-like scheme with multifactor modulus}
As done when generalizing from RSA to Multiprime RSA, 
in \cite{boudabra2017new} a generalization of \cite{koyama1992new}, \cite{koyama1995fast}
has been proposed, thus generalizing a RSA-like scheme based on elliptic curves and a modulus $N=pq$
to a similar scheme based on the generic modulus $N=p^rq^s$.\\
In this paper we present a similar generalization of the scheme \cite{bellini2016}, 
which is based on the Pell's equation, to the modulus $N=p_1^{e_1}\cdot\ldots\cdot p_r^{e_r}$ for $r > 2$, 
obtaining the fastest decryption of all schemes discussed in this section.
\section{Product of points over the Pell hyperbola}
\label{sec:product}
In \cite{bellini2016}, we introduced a novel RSA--like scheme 
based on an isomorphism between certain conics (whose the Pell hyperbola is a special case) and 
a set of parameters equipped with a non--standard product. 
In Section \ref{sec:critto}, we generalize this scheme considering 
a prime power modulus $N = p_1^{e_1} \cdots p_r^{e_r}$. 
In this section, we recall some definitions and properties given in \cite{bellini2016} 
in order to improve the readability of the paper. 
Then, we study properties of the involved products and sets in $\ZZ_{p^r}$ and $\ZZ_N$.
\subsection{A group structure over the Pell hyperbola over a field}
\label{sec:group}
Let $\mathbb K$ be a field and $x^2 - D$ an irreducible polynomial over $\mathbb K [x]$. Considering the quotient field $\mathbb A[x] = \mathbb K[x] / (x^2 - D)$, the induced product over $\mathbb A[x]$ is 

$$(p + q x) (r + s x) = (p r + q s D) + (q r + p s) x.$$

The group of unitary elements of 
$\mathbb A^*[x] = \mathbb A[x] - \{0_{\mathbb A[x]}\}$
\footnote{The element $0_{\mathbb A[x]}$ is the zero polynomial.}
is 
$\{ p + q x \in \mathbb A^*[x] : p^2 - D q^2 = 1 \}$. 
Thus, we can introduce the commutative group 
$(\pell_{D,\mathbb K}, \otimes)$, where 
$$\pell_{D,\mathbb K} = \{ (x,y) \in \mathbb K \times \mathbb K : x^2 - D y^2 =1 \}$$
and 
\begin{align}
(x,y) \otimes (w,z) = (x w + y z D, y w + x z), \quad \forall (x,y), \ (w,z) \in \pell_{D,\mathbb K}.
\end{align}\label{eq:pelloperation}
It is worth noting that $(1,0)$ is the identity and the inverse of an element $(x,y)$ is $(x,-y)$.

\begin{remark}
When $\mathbb K = \mathbb R$, the conic $\pell_{D,\mathbb K}$, for $D$ a non--square integer, is called the Pell hyperbola since it contains all the solutions of the Pell equation and $\otimes$ is the classical Brahamagupta product, see, e.g., \cite{jacobson2009}.
\end{remark}
\subsection{A parametrization of the Pell hyperbola}
\label{sec:param}
From now on let $\mathbb{A}=\mathbb{A}[x]$.\\
Starting from $\mathbb A^*$, we can derive a parametrization for $\pell_{D,\mathbb K}$. In particular, let us consider the group $\mathbb A^* / \mathbb K^*$, whose elements are the equivalence classes of $\mathbb A^*$ and can be written as
$$\{ [a + x] : a \in \mathbb K \} \cup \{[1_{\mathbb K^*}]\}.$$
The induced product over $\mathbb A^* / \mathbb K^*$ is given by 
$$[a + x] [b + x] = [ab + ax + bx + x^2] = [D + a b + (a + b) x]$$
and, if $a + b \not= 0$, we have
$$[a + x] [b + x] = \left[ \cfrac{D + a b}{a + b} + x \right]$$
else
$$[a + x] [b + x] = [D + a b] = [1_{\mathbb K^*}].$$
This construction allows us to define the set of parameters $\para_{\mathbb K} = \mathbb K \cup \{\alpha\}$, with $\alpha$ not in $\mathbb K$, equipped with the following product:
\begin{align}
\begin{cases} a \odot b = \cfrac{D + a b}{a + b}, \quad a + b \not= 0 \cr a \odot b = \alpha, \quad a + b = 0  
\end{cases}.
\end{align}\label{eq:prodpara}

We have that $(\para_\mathbb K, \odot)$ is a commutative group with identity $\alpha$ and the inverse of an element $a$ is the element $b$ such that $a + b = 0$. Now, consider the following parametrization for the conic $\pell_{D,\mathbb K}$:

$$y = \cfrac{1}{m}(x + 1)\,.$$
It can be proved that the following isomorphism between 
$(\pell_{D,\mathbb K}, \otimes)$ and $(\para_{\mathbb K}, \odot)$ holds:
\begin{equation} 
\iso_{D}:
\begin{cases} 
\pell_{D,\mathbb K} & \rightarrow \para_\KK \cr 
(x,y)  & \mapsto \cfrac{1+x}{y} \quad \forall (x,y)\in \pell_{D,\mathbb K}, \quad y\not=0 \cr 
(1,0)  & \mapsto\alpha \cr
(-1,0) & \mapsto 0 \ , 
\end{cases} 
\end{equation}
and
\begin{equation}\label{para} 
\iso^{-1}_{D}:
\begin{cases}
\para_\KK  & \rightarrow \pell_{D,\mathbb K} \cr 
m      & \mapsto \left(\cfrac{m^2+D}{m^2-D}\ , \cfrac{2m}{m^2-D}\right)\quad \forall m \in \mathbb K \cr 
\alpha & \mapsto(1,0)\ , 
\end{cases},
\end{equation}
see \cite{barbero2010} and \cite{bellini2016}.

\begin{prop}\label{prop:fermat}
When $\mathbb K = \ZZ_p$, $p$ prime, $(\para_{\mathbb K}, \odot)$ and $(\pell_{D,\mathbb K}, \otimes)$ are cyclic groups of order $p + 1$ and
$$m^{\odot (p+2)}=m \pmod p,\quad \forall m\in \para_{\ZZ_p}$$
or, equivalently
$$(x,y)^{\otimes (p+2)}=(x,y) \pmod p,\quad \forall (x,y)\in \pell_{D,\mathbb \ZZ_p},$$
where powers are performed using products $\odot$ and $\otimes$, respectively. See \cite{bellini2016}.
\end{prop}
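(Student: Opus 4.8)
The plan is to realize $(\pell_{D,\KK},\otimes)$ as a concrete subgroup of the multiplicative group of the field $\mathbb A$, and then read off both its order and its cyclicity from the structure of finite fields. First I would note that, since $x^2-D$ is irreducible over $\ZZ_p[x]$, the quotient $\mathbb A=\ZZ_p[x]/(x^2-D)$ is a field of $p^2$ elements, so $\mathbb A\cong\FF_{p^2}$ and $\mathbb A^*$ is cyclic of order $p^2-1$. As recalled in the construction preceding the statement, the assignment $(a,b)\mapsto a+bx$ identifies $\pell_{D,\KK}$ with the group of unitary (norm-one) elements $\{a+bx\in\mathbb A^* : a^2-Db^2=1\}$, and under this identification $\otimes$ is exactly the multiplication of $\mathbb A$. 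Hence it suffices to determine the norm-one subgroup of $\FF_{p^2}^*$.

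Next I would compute its order via the Frobenius. The nontrivial $\ZZ_p$-automorphism of $\FF_{p^2}$ is $\beta\mapsto\beta^p$, and it sends $a+bx$ to $a-bx$ because $x^p=-x$ (as $D$ is a non-residue, $D^{(p-1)/2}=-1$). Consequently the field norm is $N(\beta)=\beta\cdot\beta^p=\beta^{p+1}$, so the norm-one elements are precisely the solutions of $\beta^{p+1}=1$ in $\FF_{p^2}^*$. Since $\FF_{p^2}^*$ is cyclic of order $p^2-1=(p-1)(p+1)$, this equation has exactly $\gcd(p+1,p^2-1)=p+1$ solutions, forming the unique subgroup of order $p+1$; as a subgroup of a cyclic group it is itself cyclic. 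Therefore $(\pell_{D,\KK},\otimes)$ is cyclic of order $p+1$, and transporting the structure through the isomorphism $\iso_D$ introduced in Section~\ref{sec:param} shows $(\para_\KK,\odot)$ is cyclic of the same order, consistent with the direct count $|\para_\KK|=|\KK|+1=p+1$.

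Finally, the exponent identities are immediate from Lagrange's theorem: in a finite group every element raised to the group order yields the identity, so $(x,y)^{\otimes(p+1)}=(1,0)$, whence $(x,y)^{\otimes(p+2)}=(1,0)\otimes(x,y)=(x,y)$; likewise $m^{\odot(p+1)}=\alpha$ gives $m^{\odot(p+2)}=\alpha\odot m=m$, since $\alpha$ is the identity of $\para_\KK$. The equivalence of the two displayed congruences then follows at once from $\iso_D$, which carries $\otimes$-powers to $\odot$-powers.

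I expect no genuine obstacle, as the argument reduces to the structure theorem for finite fields together with Lagrange's theorem. The only point requiring care is the bridge to $\FF_{p^2}$: one must confirm that $\otimes$ really is the restriction of the field multiplication and that the norm map coincides with $\beta\mapsto\beta^{p+1}$, since it is through these identifications that the order and cyclicity are transferred to $\pell_{D,\KK}$ and then, via $\iso_D$, to $\para_\KK$.
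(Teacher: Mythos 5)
Your proof is correct. Note that the paper does not actually prove this proposition at all: it defers entirely to \cite{bellini2016}, and the argument there (which is mirrored by the construction in Section~\ref{sec:param}) is the dual of yours. In that route, $\para_{\KK}$ is by construction identified with the quotient $\mathbb{A}^*/\KK^* \cong \FF_{p^2}^*/\FF_p^*$, which is cyclic of order $(p^2-1)/(p-1)=p+1$ as a quotient of a cyclic group, and the Pell hyperbola inherits the structure through $\iso_D^{-1}$. You instead realize $\pell_{D,\KK}$ as a \emph{subgroup} of $\FF_{p^2}^*$, namely the kernel of the norm $\beta \mapsto \beta^{p+1}$, count its $\gcd(p+1,p^2-1)=p+1$ elements, and then push the structure forward to $\para_{\KK}$ through $\iso_D$; the two realizations are exchanged by Hilbert~90 (the norm-one subgroup is the image of $\beta \mapsto \beta^{p-1}$, whose kernel is exactly $\KK^*$). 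What your version buys is a fully self-contained and very explicit argument: Frobenius acts as conjugation $a+bx \mapsto a-bx$ by Euler's criterion, so the Pell condition $a^2-Db^2=1$ literally \emph{is} the norm-one condition, and cyclicity is inherited from $\FF_{p^2}^*$. What the quotient route buys is that it comes for free from the parametrization machinery (the classes $[a+x]$), giving cyclicity of $\para_{\KK}$ directly without ever introducing the norm map. One small remark: your appeal to $D^{(p-1)/2}=-1$ presupposes $p$ odd, but this is harmless, since $x^2-D$ is reducible over $\ZZ_2$ for every $D$, so the hypotheses of the proposition already force $p$ to be odd.
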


The powers in $\para_{\mathbb K}$ can be efficiently computed by means of the R\'edei rational functions \cite{redei1946}, which are classical functions in number theory. They are defined by considering the development of
$$(z + \sqrt{D})^n = A_n(D,z) + B_n(D,z)\sqrt{D},$$
for $z$ integer and $D$ non--square positive integer. The polynomials $A_n(D,z)$ and $B_n(D,z)$ defined by the previous expansion are called R\'edei polynomials and can be evaluated by
$$M^n = \begin{pmatrix} A_n(D,z) & DB_n(D,z) \cr B_n(D,z) & A_n(D,z)  \end{pmatrix}$$
where
$$M = \begin{pmatrix} z & D \cr 1 & z \end{pmatrix}.$$
From this property, it follows that the R\'edei polynomials are linear recurrent sequences with characteristic polynomial $t^2 - 2 z t + (z^2 - D)$. The R\'edei rational functions are defined by 

$$Q_n(D,z) = \cfrac{A_n(D,z)}{B_n(D,z)}, \quad \forall n \geq 1.$$

\begin{prop}
Let $m^{\odot n}$ be the n--th power of $m \in \para_{\mathbb K}$ with respect to $\odot$, then 
$$m^{\odot n} = Q_n(D, m).$$
See \cite{barbero2010a}.
\end{prop}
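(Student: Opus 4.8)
The plan is to exploit the isomorphism already implicit in the construction of $(\para_{\mathbb K}, \odot)$, namely the map $\psi : \para_{\mathbb K} \to \mathbb A^* / \mathbb K^*$ sending $m \mapsto [m + x]$ and $\alpha \mapsto [1_{\mathbb K^*}]$. By the very computation displayed above that defines $\odot$, this $\psi$ is a group isomorphism. Therefore it suffices to track the image of $m^{\odot n}$ under $\psi$ and read off the corresponding parameter.

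First I would observe that, since $\psi$ is a homomorphism,
$$\psi(m^{\odot n}) = \psi(m)^n = [m+x]^n = [(m+x)^n].$$
In the ring $\mathbb A = \mathbb K[x]/(x^2 - D)$ the class of $x$ satisfies $x^2 = D$, so it plays exactly the role of $\sqrt{D}$ in the defining expansion of the R\'edei polynomials. Consequently $(m + x)^n = A_n(D,m) + B_n(D,m)\, x$ in $\mathbb A$, and the equivalent matrix form $M^n$ recalled above makes this identity precise. Next I would pass to the quotient modulo $\mathbb K^*$: when $B_n(D,m) \neq 0$ one scales the representative by $B_n(D,m)^{-1}$ to obtain
$$[A_n(D,m) + B_n(D,m)\, x] = \left[\frac{A_n(D,m)}{B_n(D,m)} + x\right] = [Q_n(D,m) + x],$$
which is precisely $\psi(Q_n(D,m))$. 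Applying $\psi^{-1}$ then yields $m^{\odot n} = Q_n(D,m)$.

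As a more self-contained alternative I would argue by induction on $n$. The base case $n = 1$ is immediate since $A_1 = z$, $B_1 = 1$, so $Q_1(D,m) = m$. For the inductive step, writing $Q_n = A_n/B_n$ and reading off from $M^{n+1} = M^n M$ the recurrences $A_{n+1} = m A_n + D B_n$ and $B_{n+1} = A_n + m B_n$, one checks directly that
$$Q_n \odot m = \frac{D + m\,Q_n}{Q_n + m} = \frac{D B_n + m A_n}{A_n + m B_n} = \frac{A_{n+1}}{B_{n+1}} = Q_{n+1},$$
closing the induction.

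The main obstacle I anticipate is the degenerate case $B_n(D,m) = 0$, where $Q_n(D,m)$ is undefined as a ratio; there the class $[A_n(D,m)] = [1_{\mathbb K^*}]$ corresponds to the identity $\alpha$ of $\para_{\mathbb K}$, matching $m^{\odot n} = \alpha$, so the convention $Q_n(D,m) := \alpha$ must be adopted for the statement to read correctly. Establishing that the two \emph{products} coincide, rather than merely that $\psi$ is a bijection, likewise rests on the short but essential verification that the displayed multiplication rule for $[a+x][b+x]$ is exactly $\odot$; this is the one genuinely computational point on which the whole argument depends.
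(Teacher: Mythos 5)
The paper offers no proof of this proposition at all---it simply defers to the cited reference \cite{barbero2010a}---so there is no internal argument to compare yours against; what you have written fills that gap, and it does so correctly. Your main argument is the natural one given the paper's own construction: the product $\odot$ on $\para_{\mathbb K}$ is \emph{defined} as the transport of the induced product on $\mathbb A^*/\mathbb K^*$, so the map $\psi(m)=[m+x]$, $\psi(\alpha)=[1_{\mathbb K^*}]$ is an isomorphism by construction, and since $x^2=D$ in $\mathbb A=\mathbb K[x]/(x^2-D)$ the class of $x$ really does play the role of $\sqrt{D}$, giving $[m+x]^n=[A_n(D,m)+B_n(D,m)\,x]=[Q_n(D,m)+x]$ whenever $B_n(D,m)\neq 0$. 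Your inductive alternative is also sound: the recurrences $A_{n+1}=mA_n+DB_n$, $B_{n+1}=A_n+mB_n$ follow from $M^{n+1}=M^nM$, and the computation $Q_n\odot m=(DB_n+mA_n)/(A_n+mB_n)=Q_{n+1}$ closes the step. Most importantly, you correctly identified the one real subtlety that the bare statement glosses over: the case $B_n(D,m)=0$ does genuinely occur (for instance when $n$ is a multiple of the order of $m$ in the group, e.g.\ $n=p+1$ over $\ZZ_p$), and there the identity only holds under the convention that $Q_n(D,m)$ denotes the identity element $\alpha$; flagging this convention is necessary for the proposition to be literally true for all $n$.
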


\begin{remark}
The R\'edei rational functions can be evaluated by means of an algorithm of complexity $O(\log_2(n))$ with respect to addition, subtraction and multiplication over rings \cite{more1995}.
\end{remark}

\subsection{Properties of the Pell hyperbola over a ring}
\label{sec:zpr}

In this section, we study the case $\mathbb K = \ZZ_{p^r}$ that we will exploit in the next section for the construction of a cryptographic scheme. In what follows, we will omit from $\pell_{D, \mathbb K}$ the dependence on $D$ when it will be clear from the context.

First, we need to determine the order of $\pell_{\ZZ_{p^r}}$ in order to have a result similar to Proposition \ref{prop:fermat} also in this situation. 
\begin{theorem} \label{thm:pell-pr}
The order of the 
cyclic group $\pell_{\ZZ_{p^r}}$ 
is $p^{r-1} (p + 1)$, i.e., the Pell equation $x^2 - D y^2 = 1$ has $p^{r-1} (p + 1)$ solutions in $\ZZ_{p^r}$ for $D \in \ZZ_{p^r}^*$ quadratic non--residue in $\ZZ_{p}$.
\end{theorem}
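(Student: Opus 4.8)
The plan is to realize $\pell_{\ZZ_{p^r}}$ as the kernel of the norm homomorphism on the unit group of $\mathbb{A} = \ZZ_{p^r}[x]/(x^2 - D)$ and to count it via Lagrange's theorem. First I would observe that, since $D$ is a quadratic non-residue in $\ZZ_p$, the polynomial $x^2 - D$ stays irreducible modulo $p$, so the reduction $\mathbb{A} \to \mathbb{A}/p\mathbb{A} \cong \FF_{p^2}$ identifies $\mathbb{A}$ as a local ring with maximal ideal $p\mathbb{A}$ and residue field $\FF_{p^2}$. As $\mathbb{A}$ is free of rank $2$ over $\ZZ_{p^r}$ it has $p^{2r}$ elements, an element being a unit exactly when its reduction is nonzero; hence the non-units are precisely $p\mathbb{A}$, of size $p^{2r-2}$, and $|\mathbb{A}^*| = p^{2r} - p^{2r-2} = p^{2r-2}(p^2-1)$.

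Next I would introduce the norm $N(u + vx) = (u+vx)(u-vx) = u^2 - D v^2$, which is multiplicative and therefore a group homomorphism $N : \mathbb{A}^* \to \ZZ_{p^r}^*$; by construction $\pell_{\ZZ_{p^r}}$ is exactly the fibre $N^{-1}(1) = \ker N$. By the first isomorphism theorem $|\pell_{\ZZ_{p^r}}| = |\mathbb{A}^*| / |\operatorname{im} N|$, so everything reduces to determining $\operatorname{im} N$, and since $|\ZZ_{p^r}^*| = \euler(p^r) = p^{r-1}(p-1)$ the target order $p^{r-1}(p+1)$ is obtained exactly when $N$ is surjective.

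Surjectivity is the crux. Restricting $N$ to the constants $\ZZ_{p^r}^* \subseteq \mathbb{A}^*$ gives $N(u) = u^2$, so $\operatorname{im} N$ already contains the full subgroup of squares of $\ZZ_{p^r}^*$, which (for $p$ odd, as we assume throughout) has index $2$ in the cyclic group $\ZZ_{p^r}^*$. It then suffices to produce a single element of non-square norm. Here I would pass to the residue field: reduction commutes with the norm, carrying $N$ to the field norm $N_{\FF_{p^2}/\FF_p}(a) = a^{p+1}$, which is surjective onto $\FF_p^*$; since $\mathbb{A}^* \to \FF_{p^2}^*$ is surjective, some $\alpha \in \mathbb{A}^*$ has $N(\alpha)$ reducing to a non-square modulo $p$, and by Hensel's lemma such a unit is itself a non-square in $\ZZ_{p^r}^*$. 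A subgroup containing the index-$2$ square subgroup together with one non-square must be all of $\ZZ_{p^r}^*$, so $N$ is onto and $|\pell_{\ZZ_{p^r}}| = p^{2r-2}(p^2-1)/(p^{r-1}(p-1)) = p^{r-1}(p+1)$.

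The main obstacle is precisely this surjectivity of $N$; the square-subgroup-plus-one-non-square argument above is the cleanest route, the only subtlety being the reduction to the finite-field norm and the Hensel lifting identifying squares in $\ZZ_{p^r}^*$ with squares modulo $p$. Finally, to complete the statement I would separately justify cyclicity: writing $\mathbb{A}^* \cong \FF_{p^2}^* \times (1 + p\mathbb{A})$ and intersecting with $\ker N$, one checks that the prime-to-$p$ part contributes a cyclic factor of order $p+1$ (the $r=1$ case of Proposition \ref{prop:fermat}) and the $p$-part a cyclic factor of order $p^{r-1}$; this structural analysis of $1 + p\mathbb{A}$ is the more delicate ingredient, but it is not needed for the order count, which is the content asserted by the ``i.e.''.
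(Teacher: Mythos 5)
Your proposal is correct, but it proves the count by a genuinely different route than the paper. The paper argues by explicit Hensel-style lifting: starting from the $p+1$ solutions of $x^2-Dy^2\equiv 1 \pmod p$, it shows that for each such solution $(x_0,y_0)$ and each of the $p^{r-1}$ choices of $k$ there is exactly one $h$ making $(x_0+kp,\,y_0+hp)$ a solution modulo $p^r$ --- this is done by hand, completing the square in $h$ and checking via Jacobi symbols that the relevant discriminant is a quadratic residue --- and then it verifies conversely that every solution modulo $p^r$ reduces to one modulo $p$. You instead identify $\pell_{\ZZ_{p^r}}$ as the kernel of the norm $N:\mathbb{A}^*\to\ZZ_{p^r}^*$ on the local ring $\mathbb{A}=\ZZ_{p^r}[x]/(x^2-D)$, compute $\lvert\mathbb{A}^*\rvert=p^{2r-2}(p^2-1)$, and reduce everything to surjectivity of $N$, which you get from the index-$2$ square subgroup plus one non-square norm pulled up from the surjective finite-field norm on $\FF_{p^2}$ via Hensel. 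Your argument is shorter, avoids the somewhat delicate symbol manipulations (note the paper's completing-the-square step, which multiplies the congruence by $4Dp$, deserves care about the modulus), and makes structurally transparent why the answer is $\lvert\mathbb{A}^*\rvert/\lvert\ZZ_{p^r}^*\rvert$; the paper's argument is more elementary and self-contained, using nothing beyond congruences, and is constructive in that it exhibits the lifted solutions explicitly. Both arguments implicitly require $p$ odd, and both establish only the order: the paper never proves cyclicity at all, while you at least sketch it via the decomposition $\mathbb{A}^*\cong\FF_{p^2}^*\times(1+p\mathbb{A})$, so on that point your write-up is, if anything, more complete than the original.
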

\begin{proof}
Since, by Proposition \ref{prop:fermat}, 
the Pell equation in $\ZZ_p$ has $p + 1$ solutions, 
then we need to prove the following
\begin{enumerate}
 \item \label{lab:stepA} 
       any solution of the Pell equation in $\ZZ_p$,
       generates $p^{r-1}$ solutions of the same equation in $\ZZ_{p^r}$;
 \item \label{lab:stepB} 
       all the solutions of the Pell equation in $\ZZ_{p^r}$
       are generated as in the previous step.
\end{enumerate}
\begin{itemize}
 \item [(\ref{lab:stepA})]
 Let $(x_0, y_0)$ be a solution of $x^2 - D y^2 \equiv 1 \pmod p$.
 We want to prove that for any integer $0 \leq k < p^{r-1}$, 
 there exists one and only one integer $h$ such that $(x_0 + k p, y_0 + h p)$ is solution of $x^2 - D y^2 \equiv 1 \pmod{p^r}$.\\
 Indeed, we have
$$
(x_0 + k p)^2 - D (y_0 + h p)^2 
=
1 + v p + 2 x_0 k p + k^2 p^2 - 2 D y_0 h p - D h^2 p^2,
$$
since $x_0^2 - D y_0^2 = 1 + v p$ for a certain integer $v$. 
Thus, we have that $(x_0 + k p, y_0 + h p)$ is solution of $x^2 - D y^2 \equiv 1 \pmod {p^r}$ if and only if
$$D p h^2 + 2 D y_0 h - v - 2 x_0 k - k^2 p \equiv 0 \pmod{p^{r-1}}.$$
Hence, we have to prove that there is one and only one integer $h$ that satisfies the above identity. The above equation can be solved in $h$ by completing the square and reduced to
\begin{equation}\label{eq:compl-square} (2 D p h + 2 D y_0)^2 \equiv s \pmod{p^{r-1}}, \end{equation}
where $s = (2 D y_0)^2 + 4 (v + 2 x_0 k + k^2 p) D p$. 
Let us prove that $s$ is a quadratic residue in $\ZZ_{p^{r-1}}$. Indeed,
$$s = 4 D ((x_0 + k p)^2 - 1)$$
and surely the Jacobi symbol $\left( \cfrac{s}{p^{r-1}} \right) = \left( \cfrac{s}{p} \right)^{r-1} = 1$ if $r$ is odd. \\
If $r$ is even we have that
$$
\left( \cfrac{s}{p^{r-1}} \right) = 
\left( \cfrac{4}{p^{r-1}} \right) 
\left( \cfrac{D}{p^{r-1}} \right) 
\left( \cfrac{(x_0 + k p)^2 - 1}{p^{r-1}} \right)
=1
$$
since $\left( \cfrac{4}{p^{r-1}} \right) = 1$, 
$\left( \cfrac{D}{p^{r-1}} \right) = \left( \cfrac{D}{p} \right)^{r-1} = -1$ by hypothesis on $D$, 
$\left( \cfrac{(x_0 + k p)^2 - 1}{p^{r-1}} \right) = -1$, 
since $(x_0 + k p)^2 - 1 \equiv D y_0^2 \pmod p$. \\
Now, let $\pm t$ be the square roots of $s$. It is easy to note that 
$$t \equiv 2 D y_0 \pmod p, \quad -t \equiv - 2 D y_0 \pmod p$$
or 
$$-t \equiv 2 D y_0 \pmod p, \quad t \equiv - 2 D y_0 \pmod p.$$
Let us call $\bar t$ the only one between $t$ and $-t$ that is equal to $2 D y_0$ in $\ZZ_p$.
Hence, Equation \eqref{eq:compl-square} is equivalent to the linear equation 
$$p h \equiv (\bar t - 2 D y_0)(2 D)^{-1} \pmod{p^{r-1}},$$
which has one and only one solution, since $\bar t - 2 D y_0 \equiv 0 \pmod p$. Note that, if $\bar t$ is not equal to $2 D y_0$ in $\ZZ_p$ the above equation has no solutions.
Thus, we have proved that any solution of the Pell equation in $\ZZ_p$ generates $p^{r-1}$ solutions of the Pell equation in $\ZZ_{p^r}$.
 \item [(\ref{lab:stepB})]
 Now, we prove that all the solutions of the Pell equation in $\ZZ_{p^r}$ are generated as in step \ref{lab:stepA}.\\ 
Let $(\bar x, \bar y)$ be a solution of $x^2 - D y^2 \equiv 1 \pmod{p^r}$, i.e.,
$\bar x^2 - D \bar y^2 = 1 + w p^{r}$, for a certain integer $w$. 
Then $x_0 = \bar x - k p$ and $y_0 = \bar y -h p$, for $h, k$ integers, are solutions of $x^2 - D y^2 \equiv 1 \pmod p$. Indeed, 

$$
(\bar x - k p)^2 - D (\bar y - h p)^2 
= 
1 + w p^r - 2 \bar x k p + k^2 p^2 + 2 D \bar y h p - D h^2 p^2 \,.
$$
\end{itemize}
\end{proof}

As a consequence of the previous theorem, an analogous of the Euler theorem holds for the product $\otimes$.

\begin{theorem}
Let $p$, $q$ be prime numbers and $N = p^r q^s$, then for all $(x,y) \in \pell_{\ZZ_N}$ we have
$$(x,y)^{\otimes p^{r-1}(p + 1) q^{s-1}(s + 1)} \equiv (1, 0) \pmod N$$
for $D \in \ZZ_{N}^*$ quadratic non--residue in $\ZZ_p$ and $\ZZ_q$.
\end{theorem}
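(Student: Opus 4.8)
The plan is to reduce the statement to the single prime power case already settled in Theorem~\ref{thm:pell-pr} by means of the Chinese Remainder Theorem. The key observation is that the product $\otimes$ on $\pell_{\ZZ_N}$ is defined purely through the ring operations of $\ZZ_N$, so it is compatible with the natural ring isomorphism $\ZZ_N \cong \ZZ_{p^r} \times \ZZ_{q^s}$.

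First I would introduce the reduction map
$$\pi : \pell_{\ZZ_N} \to \pell_{\ZZ_{p^r}} \times \pell_{\ZZ_{q^s}}, \qquad (x,y) \mapsto \bigl((x \bmod p^r, y \bmod p^r),\ (x \bmod q^s, y \bmod q^s)\bigr),$$
and verify that it is well defined: if $x^2 - D y^2 \equiv 1 \pmod N$ then the same congruence holds modulo $p^r$ and modulo $q^s$, so each component lands on the corresponding hyperbola. Since the coordinates of $(x,y)\otimes(w,z)$ are polynomial expressions in the coordinates of the factors and in $D$, and reduction modulo $p^r$ and $q^s$ is a ring homomorphism, the map $\pi$ is a homomorphism for $\otimes$. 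Conversely, given a point on each factor, the Chinese Remainder Theorem produces unique $x,y\in\ZZ_N$ reducing to the prescribed coordinates, and applying it to the quantity $x^2-D y^2$ shows that $(x,y)$ lies on $\pell_{\ZZ_N}$; hence $\pi$ is an isomorphism of groups.

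Next I would compute the order of the target group. The hypothesis that $D$ is a quadratic non--residue modulo $p$ and modulo $q$ is exactly what is needed to apply Theorem~\ref{thm:pell-pr} to each prime power, giving $|\pell_{\ZZ_{p^r}}| = p^{r-1}(p+1)$ and $|\pell_{\ZZ_{q^s}}| = q^{s-1}(q+1)$; this non--residue condition also guarantees $D \in \ZZ_N^*$, since it forces $p \nmid D$ and $q \nmid D$. Therefore the direct product has order $p^{r-1}(p+1)\,q^{s-1}(q+1)$, which is the exponent appearing in the statement.

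Finally, I would invoke Lagrange's theorem: in any finite group, every element raised to the order of the group yields the identity. Transporting this fact through the isomorphism $\pi$, and recalling that $\pi$ sends the identity $(1,0)$ to the pair of identities, I conclude that every $(x,y)\in\pell_{\ZZ_N}$ satisfies the claimed congruence. I do not expect a serious obstacle here; the only point demanding care is the verification that $\pi$ respects $\otimes$ and is bijective, that is, that the non--standard Brahamagupta product is faithfully distributed across the two factors by the Chinese Remainder Theorem. Note that cyclicity plays no role in this argument, so Lagrange's theorem alone suffices for the conclusion.
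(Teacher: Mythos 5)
Your proposal is correct and takes essentially the same route as the paper: both reduce the problem modulo $p^r$ and modulo $q^s$, invoke Theorem~\ref{thm:pell-pr} for the orders of the prime-power groups, and recombine via the Chinese Remainder Theorem. The only difference is presentational --- you package the recombination as a group isomorphism $\pell_{\ZZ_N} \cong \pell_{\ZZ_{p^r}} \times \pell_{\ZZ_{q^s}}$ followed by Lagrange's theorem, whereas the paper verifies the final congruence $(a,b) \equiv (1,0) \pmod{N}$ directly by writing out the integer congruences coordinate by coordinate.
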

\begin{proof}
By Theorem \ref{thm:pell-pr}, we know that
$$(x,y)^{\otimes p^{r-1}(p + 1)} \equiv (1, 0) \pmod {p^r}$$
and
$$(x,y)^{\otimes q^{s-1}(s + 1)} \equiv (1, 0) \pmod {q^s}.$$
Thus, said $(a, b) = (x,y)^{\otimes p^{r-1}(p + 1) q^{s-1}(s + 1)}$, we have
$$(a, b) \equiv (1,0) \pmod {p^r},$$
i.e., $a = 1 + kp^r$ and $b = hp^r$ for some integers $h$, $k$. On the other hand, we have
$$(a, b) \equiv (1,0) \pmod {q^s} \Leftrightarrow (1 + kp^r, hp^r) \equiv (1,0) \pmod {q^s}.$$
We can observe that $1 + kp^r \equiv 1 \pmod {q^s}$ if and only if $k = k'q^s$ for a certain integer $k'$. Similarly, it must be $h = h'q^s$, for an integer $h'$. Hence, we have that
$(a, b) = (1 + k'p^rq^s, h'p^rq^s) \equiv (1, 0) \pmod N$.
\end{proof}

\begin{corollary}
Let $p_1, ..., p_r$ be primes and $N = p_1^{e_1}\cdot \ldots \cdot p_r^{e_r}$, then for all $(x,y) \in \pell_{\ZZ_N}$ we have
$$(x,y)^{\otimes \Psi(N)} = (1,0) \pmod N,$$
where
$$\Psi(N) = p_1^{e_1-1} (p_1 + 1) \cdot \ldots \cdot p_r^{e_r-1} (p_r + 1),$$
for $D \in \ZZ_{N}^*$ quadratic non--residue in $\ZZ_{p_i}$, for $i = 1, ..., r$.
\end{corollary}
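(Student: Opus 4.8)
The plan is to reduce the assertion to the single prime-power case already established in Theorem \ref{thm:pell-pr}, recombining the local information through the Chinese Remainder Theorem exactly as in the proof of the preceding theorem, but now for an arbitrary number of factors. The key observation making this legitimate is that the product $\otimes$ on $\pell$ is given by polynomials with integer coefficients, namely $(x,y)\otimes(w,z)=(xw+yzD,\,yw+xz)$. Consequently, the reduction maps $\ZZ_N\to\ZZ_{p_i^{e_i}}$ induce group homomorphisms $\pell_{\ZZ_N}\to\pell_{\ZZ_{p_i^{e_i}}}$, and taking them together yields a homomorphism $\pell_{\ZZ_N}\to\prod_{i=1}^r\pell_{\ZZ_{p_i^{e_i}}}$ which, since the $p_i^{e_i}$ are pairwise coprime, is a group isomorphism by the Chinese Remainder Theorem.

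Granting this, the argument is short. For each index $i$, Theorem \ref{thm:pell-pr} tells us that $\pell_{\ZZ_{p_i^{e_i}}}$ is a (cyclic) group of order $p_i^{e_i-1}(p_i+1)$, using the hypothesis that $D$ is a quadratic non-residue modulo $p_i$. By the very definition of $\Psi(N)=\prod_{j=1}^r p_j^{e_j-1}(p_j+1)$, this number is a multiple of the order $p_i^{e_i-1}(p_i+1)$ of each factor group, so Lagrange's theorem gives $(x,y)^{\otimes\Psi(N)}\equiv(1,0)\pmod{p_i^{e_i}}$ for every $i$. Transporting these $r$ local identities back through the isomorphism — equivalently, lifting the system of congruences $(x,y)^{\otimes\Psi(N)}\equiv(1,0)\pmod{p_i^{e_i}}$ via the Chinese Remainder Theorem — produces the single congruence $(x,y)^{\otimes\Psi(N)}\equiv(1,0)\pmod N$, which is the claim.

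As an alternative one may proceed by induction on the number $r$ of prime factors, taking Theorem \ref{thm:pell-pr} as the base case $r=1$ and performing an inductive step identical to the explicit CRT computation already carried out for $N=p^rq^s$ in the previous theorem. I do not expect any genuine obstacle here: the only point that truly requires attention is the verification that reduction modulo each $p_i^{e_i}$ is compatible with $\otimes$, so that the componentwise congruences can indeed be recombined into one congruence modulo $N$. This compatibility is immediate from the polynomial form of the product, and everything else is a routine application of Lagrange's theorem and the Chinese Remainder Theorem.
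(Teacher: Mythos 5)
Your proposal is correct and follows essentially the same route as the paper: the paper proves the preceding theorem (the case $N=p^rq^s$) by exactly this localize-then-recombine argument, reducing to Theorem \ref{thm:pell-pr} modulo each prime power and then recombining the congruences — done there by explicit manipulation ($a=1+kp^r$, $b=hp^r$, etc.) rather than by invoking the CRT isomorphism abstractly — and states the corollary as the evident $r$-factor extension. Your phrasing via the induced homomorphisms, Lagrange's theorem, and CRT injectivity is just a cleaner packaging of the same idea.
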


Now, we can observe that when we work on $\ZZ_{N}$, the map $\iso_D$ is not an isomorphism. Indeed, the orders of $\pell_{D,\ZZ_{N}}$ and $\para_{\ZZ_{N}}$ do not coincide.
However, it is still a morphism and we also have 
$\lvert \ZZ_{N}^* \rvert = \lvert \pell^*_{\ZZ_{N}} \rvert$, 
because of the following proposition.
\begin{prop}
With the above notation, we have that
\begin{enumerate}
\item $\forall (x_1,y_1), (x_2,y_2)\in \pell^*_{\ZZ_{N}}$, $\iso_D(x_1,y_1)=\iso_D(x_2,y_2)\Leftrightarrow (x_1,y_1)=(x_2,y_2)$;
\item $\forall m_1, m_2\in \ZZ_{N}^*$, $\iso_D^{-1}(m_1)=\iso_D^{-1}(m_2)\Leftrightarrow m_1=m_2$;
\item $\forall m\in \ZZ_{N}^*$, we have $\iso^{-1}(m)\in \pell^*_{\ZZ_{N}}$ and $\forall (x,y)\in \pell^*_{\ZZ_{N}}$, we have $\iso_D(x,y)\in\ZZ_{N}^*$.
\end{enumerate}
See \cite{bellini2016}.
\end{prop}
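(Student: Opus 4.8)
The plan is to prove the three assertions simultaneously by showing that $\iso_D$ and $\iso_D^{-1}$ restrict to a pair of mutually inverse bijections between $\pell^*_{\ZZ_N}$ and $\ZZ_N^*$. This splits into two tasks. First, a \emph{well-definedness} task: checking that $\iso_D^{-1}$ sends $\ZZ_N^*$ into $\pell^*_{\ZZ_N}$ and that $\iso_D$ sends $\pell^*_{\ZZ_N}$ into $\ZZ_N^*$; this is exactly item (3). Second, a \emph{composition} task: verifying $\iso_D^{-1}\circ\iso_D=\mathrm{id}_{\pell^*_{\ZZ_N}}$ and $\iso_D\circ\iso_D^{-1}=\mathrm{id}_{\ZZ_N^*}$. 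A map with a left inverse is injective, so the first composition identity yields item (1) and the second yields item (2). Thus all three claims reduce to these two tasks.

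For well-definedness, the decisive observation is that $m^2-D\in\ZZ_N^*$ for \emph{every} $m\in\ZZ_N$. Working factor by factor through the Chinese Remainder Theorem, $m^2-D$ is a unit modulo $N$ iff $p_i\nmid(m^2-D)$ for each $i$; but since $D$ is a quadratic non-residue modulo each $p_i$, the congruence $m^2\equiv D\pmod{p_i}$ has no solution, so $p_i\nmid(m^2-D)$ for all $m$. Hence the denominators in $\iso_D^{-1}(m)=\left(\frac{m^2+D}{m^2-D},\frac{2m}{m^2-D}\right)$ are invertible. I would then confirm that this point lies on the hyperbola via the formal identity $(m^2+D)^2-4Dm^2=(m^2-D)^2$, and that its second coordinate $\frac{2m}{m^2-D}$ is a unit, using that each $p_i$ is odd (forced by the existence of a quadratic non-residue), so $2\in\ZZ_N^*$, together with $m\in\ZZ_N^*$; this gives $\iso_D^{-1}(m)\in\pell^*_{\ZZ_N}$. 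Conversely, for a point $(x,y)\in\pell^*_{\ZZ_N}$, whose second coordinate is invertible by definition, the relation $x^2-Dy^2=1$ rewrites as $(x-1)(x+1)=Dy^2$; since $y$ and $D$ are units the right-hand side is a unit, forcing $1+x\in\ZZ_N^*$, so $\iso_D(x,y)=\frac{1+x}{y}\in\ZZ_N^*$.

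The composition task is then a direct substitution reproducing, over $\ZZ_N$, the calculation already valid over a field. Setting $m=\frac{1+x}{y}$ and using $Dy^2=x^2-1$ gives $m^2-D=\frac{2D}{x-1}$ and $m^2+D=\frac{2Dx}{x-1}$, whence $\iso_D^{-1}(\iso_D(x,y))=(x,y)$; the reverse identity $\iso_D(\iso_D^{-1}(m))=m$ follows at once from $1+\frac{m^2+D}{m^2-D}=\frac{2m^2}{m^2-D}$. I would stress that every denominator occurring here ($y$, $x-1$, $x+1$, $m^2-D$) has already been shown to be a unit, so the manipulations are legitimate in $\ZZ_N$.

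The main obstacle is precisely this control of invertibility over the composite modulus: the algebraic identities are formal and transfer verbatim from the field case of \cite{bellini2016}, so the only genuine work is ensuring that no denominator becomes a zero divisor. Everything hinges on the uniform non-vanishing of $m^2-D$ modulo each $p_i$, supplied by the quadratic non-residuosity hypothesis, together with the oddness of the $p_i$ to invert $2$. Once these unit conditions are secured, the CRT reduction makes the passage from the fields $\ZZ_{p_i}$ to $\ZZ_N$ routine.
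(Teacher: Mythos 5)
Your proof is correct, and it is worth noting that the paper itself supplies no argument for this proposition at all: it simply defers to \cite{bellini2016}, where the analogous statement is established for a two-prime modulus $N=pq$. Your write-up therefore does something the paper does not do, namely give a self-contained proof valid for the general modulus $N=p_1^{e_1}\cdots p_r^{e_r}$. The structure you chose (reduce items (1) and (2) to the existence of two-sided inverses, and isolate item (3) as the unit-control step) is the natural one and is essentially the route taken in the cited reference; the genuinely important observations are all present in your argument: $m^2-D\in\ZZ_N^*$ for every $m$ because $D$ is a quadratic non-residue modulo each $p_i$ (this is where the hypothesis on $D$ enters, and it works verbatim for prime-power factors since non-vanishing mod $p_i$ suffices); $2\in\ZZ_N^*$ since a quadratic non-residue can only exist modulo an odd prime; and the factorization $(x-1)(x+1)=Dy^2$ forcing $1+x$ to be a unit when $y$ is. Your explicit verification of the two compositions, with the intermediate identities $m^2-D=\frac{2D}{x-1}$ and $m^2+D=\frac{2Dx}{x-1}$, is also correct, and your insistence that every denominator encountered has been certified a unit beforehand is exactly the point that distinguishes the ring case $\ZZ_N$ from the field case where the parametrization was first derived.
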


As a consequence, we have an analogous of the Euler theorem also for the product $\odot$, i.e., for all $m \in \ZZ^*_N$ the following holds
$$m^{\odot \Psi(N)} = \alpha \pmod N\,,$$
where $\odot$ is the special product in $\para_{\ZZ_N}$ 
defined in Equation \ref{eq:prodpara}.
\section{The cryptographic scheme}\label{sec:critto}
In this section, we describe our public--key cryptosystem based on the properties studied in the previous section. 
\subsection{Key generation}
The key generation is performed by the following steps:
\begin{itemize}
\item choose $r$ prime numbers $p_1, \dots, p_r$, 
      $r$ odd integers $e_1, \dots, e_r$ and 
      compute $N = \prod_{i=1}^r p_i^{e_i}$;
\item choose an integer $e$ such that 
      $\gcd ( e, \lcm \prod_{i=1}^r{p_i^{e_i-1}(p_i + 1)} ) = 1$;
\item evaluate $d = e^{-1} \pmod{\lcm \prod_{i=1}^r{p_i^{e_i-1}(p_i + 1)}}$.
\end{itemize}
The public or encryption key is given by $(N, e)$ and the secret or decryption key is given by $(p_1,\ldots,p_r, d)$.
\subsection{Encryption}
We can encrypt pair of messages $(M_x, M_y) \in \ZZ_N^* \times \ZZ_N^*$, 
such that the following condition holds: $\left( \cfrac{M_x^2 - 1}{N} \right) = -1$. 
This will ensure that we can perform all the operations. 
The encryption of the messages is performed by the following steps:
\begin{itemize}
\item compute $D = \cfrac{M_x^2 - 1}{M_y^2} \pmod N$, so that $(M_x, M_y) \in \pell^*_{D, \ZZ_N}$;
\item compute $M = \iso(M_x, M_y) = \cfrac{M_x + 1}{M_y} \pmod N$;
\item compute the ciphertext $C = M^{\odot e} \pmod N = Q_e(D,M) \pmod N$
\end{itemize}
Notice that not only $C$, but 
the pair $(C,D)$ must be sent through the insecure channel.
\subsection{Decryption}
The decryption is performed by the following steps:
\begin{itemize}
\item compute $C^{\odot d} \pmod N = Q_d(D,C) \pmod N = M$;
\item compute $\iso^{-1}(M) = \left( \cfrac{M^2 + D}{M^2 - D}, \cfrac{2 M}{M^2 - D} \right) \pmod N$ for retrieving the messages $(M_x, M_y)$.
\end{itemize}
\section{Security of the encryption scheme}
\label{sec:security}
The proposed scheme can be attacked by solving one of the following problems:
\begin{enumerate}
 \item factorizing the modulus $N=p_1^{e_1}\cdot \ldots \cdot p_r^{e_r}$; 
       \label{item:fact}
  \item computing $\Psi(N) = p_1^{e_1-1}(p_1+1)\cdot\ldots\cdot p_r^{e_r-1}(p_r+1)$,
        or finding the number of solutions of the equation $x^2-Dy^2 \equiv 1 \mod N$,
        i.e. the curve order, which divides $\Psi(N)$;
       \label{item:phi}
 \item computing Discrete Logarithm problem either in 
       $(\pell^*_{\ZZ_N},\otimes)$ or in $(\para^*_{\ZZ_N},\odot)$;
       \label{item:dlog}
 \item finding the unknown $d$ in the equation $ed \equiv 1 \mod \Psi(N)$;
       \label{item:equ}
 \item finding an impossible group operation in $\para_{\ZZ_N}$;
 \item computing $M_x,M_y$ from $D$.
\end{enumerate}
\subsection{Factorizing \textit{N} or computing the curve order}
It is well known that the problem of factorizing $N=p_1^{e_1}\cdot \ldots \cdot p_r^{e_r}$
is equivalent to that of computing the Euler totient function 
$\phi(N)=p_1^{e_1-1}(p_1-1)\cdot\ldots\cdot p_r^{e_r-1}(p_r-1)$, 
e.g. see \cite{miller1975riemann} or \cite[Section 10.4]{shoup2009computational}.\\
In our case we need to show the following
\begin{prop}
The problem of factorizing $N$ is equivalent to computing the value 
$\Psi(N)=p_1^{e_1-1}(p_1+1)\cdot\ldots\cdot p_r^{e_r-1}(p_r+1)$ 
or the order of the group 
$\para^*_{\ZZ_N}$ (or equivalently of $\pell^*_{\ZZ_N}$),
which is a divisor of $\Psi(N)$. 
\end{prop}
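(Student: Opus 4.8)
The plan is to prove the two implications separately, following the template of the classical equivalence between factoring $N$ and computing the Euler totient $\phi(N)$ (see \cite{miller1975riemann}, \cite[Section 10.4]{shoup2009computational}), but carrying the hard direction out inside the group $(\pell_{\ZZ_N}, \otimes)$ rather than inside $\ZZ_N^*$. First I would dispatch the easy direction: if the factorization $N = p_1^{e_1}\cdots p_r^{e_r}$ is known, then $\Psi(N) = \prod_{i=1}^r p_i^{e_i-1}(p_i+1)$ is obtained by direct evaluation, and the order of $\para^*_{\ZZ_N}$ (equivalently of $\pell^*_{\ZZ_N}$) is recovered from the local orders given by Theorem \ref{thm:pell-pr} together with the Chinese Remainder Theorem; since that order divides $\Psi(N)$, both quantities are computable in polynomial time from the factorization.

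For the converse — the substantial direction — I would exhibit a randomized reduction showing that knowledge of $\Psi(N)$ suffices to split $N$. The engine is the observation that, for every prime $p_i$, the full local order $p_i^{e_i-1}(p_i+1)$ is literally a factor of $\Psi(N)$, so whenever $D$ is a quadratic non--residue modulo $p_i$ one has $P^{\otimes \Psi(N)} \equiv (1,0) \pmod{p_i^{e_i}}$ for every $P \in \pell_{D,\ZZ_N}$; whereas modulo a prime $p_i$ for which $D$ is a residue the local group has order $p_i^{e_i-1}(p_i-1)$, which need not divide $\Psi(N)$, so there $P^{\otimes \Psi(N)}$ is generically $\neq (1,0)$. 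The algorithm is then: pick a random nonzero $D\in\ZZ_N$ and a random point $P\in\pell_{D,\ZZ_N}$ (generated by choosing a random parameter and applying $\iso^{-1}$), compute $Q = P^{\otimes \Psi(N)} \bmod N$, and return $\gcd(x_Q - 1, N)$, where $x_Q$ is the first coordinate of $Q$. Because $x_Q \equiv 1$ modulo exactly the non--residue primes and (generically) $x_Q \not\equiv 1$ modulo the residue primes, this gcd is a proper nontrivial factor of $N$ whenever $D$ is a residue modulo some but not all of the $p_i$; one then repeats with fresh $D$ and $P$ and recurses on the factors to obtain the full factorization.

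The main obstacle I anticipate is the probabilistic bookkeeping, not the gcd extraction. Concretely, one must argue that the reduction escapes the two degenerate patterns: if $D$ is a non--residue modulo every $p_i$ then $Q=(1,0)$ and $\gcd(x_Q-1,N)=N$, while if $D$ is a residue modulo every $p_i$ then $x_Q-1$ is generically coprime to $N$; both are failures that force a retry. For $r\ge 2$ a random $D$ produces the useful \emph{mixed} pattern with constant probability, so a few independent choices of $D$ succeed except with negligible probability, but this quantitative claim — together with the secondary requirement that $P^{\otimes \Psi(N)}\neq(1,0)$ modulo the residue primes, handled by taking $P$ of large order or by repetition — is the part that needs the careful, if routine, analysis. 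As a deterministic sanity check in the simplest case $N=pq$, note that $\Psi(N)=(p+1)(q+1)=N+1+(p+q)$ yields $p+q=\Psi(N)-N-1$, which combined with $pq=N$ recovers $p$ and $q$ from a single quadratic; the randomized group--theoretic argument above is precisely what upgrades this to arbitrary multifactor and prime--power moduli.
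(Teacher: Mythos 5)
Your overall architecture (a trivial forward direction plus a randomized reduction from factoring to knowledge of $\Psi(N)$) matches the paper's, but the engine of your reduction is genuinely different. The paper's proof simply invokes its Algorithm~\ref{alg:factors} recursively, which is a transplant of Shoup's $\euler(N)$-factoring procedure \cite[Section 10.4]{shoup2009computational} into the parameter group $(\para_{\ZZ_N},\odot)$ with the scheme's fixed non-residue $D$: write $\Psi(N)=2^h t$ with $t$ odd, set $b=a^{\odot t}$ for a random $a$, then $\odot$-square repeatedly, taking a gcd at each of the $h$ steps so as to catch a nontrivial square root of the identity, which reveals a factor. Your reduction instead fixes the exponent to be all of $\Psi(N)$ and randomizes over $D$, exploiting the twist dichotomy of the conic: the local order is $p_i^{e_i-1}(p_i+1)\mid\Psi(N)$ at non-residue primes versus $p_i^{e_i-1}(p_i-1)$ at residue primes, so $\gcd(x_Q-1,N)$ splits $N$ for a mixed-character $D$. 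This is a legitimate and arguably more structural route (and your deterministic quadratic for squarefree $N=pq$ is a nice bonus the paper does not state), whereas the paper's route needs no assumption on how $p_i-1$ interacts with $\Psi(N)$ and is expected to inherit Shoup's $1/2$-per-trial success rate. Note that both proofs are sketches on the probabilistic side: the paper itself concedes in a remark that a rigorous success-probability analysis of its algorithm is left open.

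The genuine weak point is your quantitative claim that ``a few independent choices of $D$ succeed except with negligible probability.'' This fails in the worst case: if $p_j-1$ divides $\Psi(N)$ for every $j$, then, since $\gcd\bigl(p_j^{e_j-1},p_j-1\bigr)=1$ and $p_j^{e_j-1}\mid\Psi(N)$, the full local order $p_j^{e_j-1}(p_j-1)$ divides $\Psi(N)$ at residue primes as well, so $P^{\otimes\Psi(N)}\equiv(1,0)$ modulo every $p_j^{e_j}$ for \emph{every} choice of $D$ and $P$, and your gcd equals $N$ on every trial. Concretely, $N=105=3\cdot5\cdot7$ has $\Psi(N)=4\cdot6\cdot8=192$, and $2$, $4$, $6$ all divide $192$, so your reduction fails deterministically on this $N$; similar (admittedly contrived) large examples can be built, e.g.\ from Fermat-type primes whose $p-1$ is a power of $2$ absorbed by the $2$-part of $\Psi(N)$. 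Since the proposition asserts an equivalence for all $N$, this failure mode must be addressed. The natural repair is precisely the paper's $2$-adic trick: when the mixed-character test degenerates, strip the $2$-part of $\Psi(N)$ and run the repeated-squaring/gcd sweep, which detects nontrivial square roots of $(1,0)$ and is insensitive to whether $p_j-1$ divides $\Psi(N)$. With that fallback added (or substituted), your argument is sound.
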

\begin{proof}
Clearly, knowing the factorization of $N$ yields $\Psi(N)$.\\
Conversely, suppose $N$ and $\Psi(N)$ are known.
A factorization of $N$ can be found by applying Algorithm \ref{alg:factors} recursively.
\end{proof}
\begin{remark}
Algorithm \ref{alg:factors} is an adaptation of the general algorithm 
in \cite[Section 10.4]{shoup2009computational}, 
used to factorize $N$ by only knowing $\euler(N)$ (Euler totient function) and $N$ itself. 
The main idea of the Algorithm  \ref{alg:factors} comes from the fact that
$x^{\odot \Psi(N)}=1 \pmod N$ for all $x \in \ZZ^*_N$, which is the analog of the Euler theorem in $\para_{\ZZ_N}$. 
Notice that, because of Step \ref{step:rand}, 
Algorithm \ref{alg:factors} is a probabilistic algorithm. 
Thus, to find a non-trivial factor, it might be necessary to run the algorithm more than once. 
We expect that a deeper analysis of the algorithm 
will lead to a similar probabilistic behaviour than the algorithm in \cite{shoup2009computational}, 
which returns a non-trivial factor with probability $1/2$.
\end{remark}
\begin{algorithm}
\caption{Find a factor of $N$ by knowing $N$ and $\Psi(N)$}\label{alg:factors}
\begin{algorithmic}[1]
\Function{Find factor($N$,$\Psi(N)$)}{}
\State $h =0$
\State $t = \Psi(N)$
\While {IsEven($t$)}
  \State h = h + 1
  \State t = t / 2
\EndWhile

\State $a = Random(N - 1)$ \label{step:rand}
\State $d = \gcd(a, N)$
\If {$d \ne 1$}
  \State \Return $d$
\EndIf

\State $b = a^{\odot t} \mod N$
\For {$j=0, \ldots, h-1$}
  \State $d = \gcd(b+1,N)$
  \If {$d \ne 1$ or $d \ne N$}
    \State \Return $d$
  \EndIf
  \State $b = b^2 \mod N$
\EndFor
\State \Return $0$
\EndFunction
\end{algorithmic}
\end{algorithm}
Since we proved that the problems \ref{item:fact} and \ref{item:phi} are equivalent, 
we can only focus on the factorization problem.\\
According to \cite{compaq2000multiprime}, 
state-of-the-art factorization methods as the Elliptic Curve Method \cite{lenstra1987factoring} or 
the Number Field Sieve \cite{lenstra1993number}, \cite{bernstein1993general}
are not effective if in the following practical cases
\begin{itemize}
 \item $|N|=1024, 1536, 2048, 2560, 3072, 3584$ and 
       $N=p_1^{e_1} p_2^{e_2} p_3^{e_3}$ with $e_1+e_2+e_3 \le 3$
       and $p_i,i=1,2,3$ greater than approximately the size of $\sqrt[3]{N}$.
 \item $|N|=4096$ and 
       $N=p_1^{e_1} p_2^{e_2} p_3^{e_3} p_4^{e_4}$ with $e_1+e_2+e_3+e_4 \le 4$
       and $p_i,i=1,\ldots,4$ greater than approximately the size of $\sqrt[4]{N}$.
 \item $|N|=8192$ and 
       $N=p_1^{e_1} p_2^{e_2} p_3^{e_3} p_4^{e_4} p_5^{e_5}$ with $e_1+e_2+e_3+e_4+e_5 \le 5$
       and $p_i,i=1,\ldots,5$ greater than approximately the size of $\sqrt[5]{N}$.
\end{itemize}
Notice that currently, the largest prime factor found by the Elliptic Curve Method 
is a 274 bit digit integer \cite{zimm17ecmwebsite}.
Note also that the Lattice Factoring Method (LFM) of 
Boneh, Durfee, and Howgrave-Graham \cite{boneh1999factoring} 
is designed to factor integers of the form $N=p^u q$ only for large $u$.

\subsection{Computing the Discrete Logarithm}
Solving the discrete logarithm problem in a conic curve 
can be reduced to the discrete logarithm problem in the underlying finite field \cite{menezes1992note}.
In our case the curve is defined over the ring $\mathbb{Z}_N$. 
Solving the DLP over $\mathbb{Z}_N$ without knowing the factorization of $N$ is
as hard as solving the DLP over a prime finite field of approximately the same size.
As for the factorization problem, the best known algorithm to solve DLP on a prime finite field
is the Number Field Sieve.
When the size of $N$ is greater than 1024 then the NFS can not be effective.
\subsection{Solving the private key equation}
In the case of RSA, small exponent attacks 
(\cite{wiener1990cryptanalysis}, \cite{coppersmith1996low}, \cite{coppersmith1997small}) 
can be performed to find the unknown $d$ in the equation $ed \equiv 1 \mod \Psi(N)$.
Generalization of these attacks can be performed on RSA variants 
where the modulus is of the form $N=p_1^{e_1}p_2^{e_2}$ \cite{lu2017cryptanalysis}.
It has already been argued in \cite{bellini2016} and \cite{koyama1995fast}
that this kind of attacks fails when the trapdoor function is not a
simple monomial power as in RSA, as it is in the proposed scheme.
\subsection{Finding an impossible group operation}
In the case of elliptic curves over $\ZZ_N$, 
as in the generalized KMOV cryptosystem \cite{boudabra2017new}, 
it could happen that an impossible addition between two curve points occurs, 
yielding the factorization of $N$. 
This is due to the fact that the addition formula
requires to perform an inversion in the underlying ring $\ZZ_N$.
However, as shown by the same authors of \cite{boudabra2017new}, 
the occurrence of an impossible addition is very unlikely 
for $N$ with few and large prime factors.\\
In our case an impossible group operation may occur if 
$a+b$ is not invertible in $\ZZ_N$, i.e. if $\gcd(a+b,N) \ne 1$,
yielding in fact a factor of $N$.
However, also in our case, if $N$ contains a few large prime factors, 
impossible group operations occur with negligible probability, 
as shown by the following proposition.
\begin{prop}
The probability to find an invertible element in $\para_{\ZZ_N}$ is approximately
$$
1-\left( 1-\frac{1}{p_1}\right)\cdot\ldots\cdot\left(1-\frac{1}{p_r}\right)
$$
\end{prop}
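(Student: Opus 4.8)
The plan is to reduce the invertibility of the randomly occurring denominator to a product of independent local conditions, one per prime factor of $N$, by means of the Chinese Remainder Theorem. First I would make precise which event we are measuring: by the definition of $\odot$, the operation $a \odot b$ cannot be carried out exactly when the denominator $a+b$ is a non-unit of $\ZZ_N$, i.e. when $\gcd(a+b,N)\neq 1$ (the case $a+b\equiv 0$ being the separately handled one that returns $\alpha$). Modelling $a+b$ as an essentially uniform element $c\in\ZZ_N$, the number we must compute is the density of units of $\ZZ_N$, equivalently $\lvert\ZZ_N^*\rvert/N$, and its complement.

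Next I would count the units. Writing $N=p_1^{e_1}\cdots p_r^{e_r}$, the isomorphism $\ZZ_N\cong\ZZ_{p_1^{e_1}}\times\cdots\times\ZZ_{p_r^{e_r}}$ shows that $c$ is a unit if and only if each component $c_i$ is a unit of $\ZZ_{p_i^{e_i}}$, that is, if and only if $p_i\nmid c_i$ for every $i$. For a uniform $c_i$ the multiples of $p_i$ form a fraction exactly $1/p_i$ of $\ZZ_{p_i^{e_i}}$, so the local probability of being a unit is $1-1/p_i$; since the CRT components are independent and uniform, these multiply to give $\lvert\ZZ_N^*\rvert/N=\euler(N)/N=\prod_{i=1}^r\left(1-\frac{1}{p_i}\right)$, using $\euler(N)=\prod_i p_i^{e_i-1}(p_i-1)$. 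This is the probability of selecting an invertible (admissible) element; taking the complement yields the probability $1-\prod_{i=1}^r\left(1-\frac{1}{p_i}\right)$ that $a+b$ is non-invertible, i.e. that an impossible group operation occurs, which is the asserted formula.

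The step I expect to be the main obstacle --- and the source of the word \emph{approximately} in the statement --- is the justification of the uniformity and independence hypotheses. Strictly speaking $a$ and $b$ range over $\para_{\ZZ_N}$ and their sum need not be perfectly uniform on $\ZZ_N$, and the genuinely defined case $a+b\equiv 0\pmod N$ should be removed from the count of bad events; both corrections shift the density only by terms of order $1/N$, which are negligible when $N$ is built from a few large primes. I would therefore present the derivation above as an estimate valid up to these lower-order terms, and close by noting that each $1/p_i$ is tiny for large $p_i$, so the impossible-operation probability is itself negligible, exactly as required to dismiss this attack.
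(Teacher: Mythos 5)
Your proposal is correct and follows essentially the same route as the paper: both arguments reduce the claim to the density of non-units, computing the complement of $\euler(N)/N=\prod_{i=1}^r\left(1-\frac{1}{p_i}\right)$ up to $O(1/N)$ corrections (the paper counts over $\para_{\ZZ_N}$ with $N+1$ elements and uses $N\sim N+1$; your CRT argument is just a derivation of the Euler product formula the paper quotes directly). You also correctly resolved the statement's wording slip --- the formula is the probability of a \emph{non-invertible} denominator, i.e.\ of an impossible group operation --- which is exactly how the paper's own proof treats it.
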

\begin{proof}
The probability to find an invertible element in $\para_{\ZZ_N}$ is given by 
dividing the number of non-invertible elements in $\para_{\ZZ_N}$ by the total number of elements of this set, as follows:
\begin{align}
  & \frac{|\para_{\ZZ_N}| - \#\{\text{invertible elements in }\para_{\ZZ_N} \}}
    {|\para_{\ZZ_N}|} 
= \\
= & \frac{|\ZZ_N|+1 - (\#\{\text{invertible elements in }\ZZ_N\}+1)}
    {|\ZZ_N|+1} 
= \\
= & \frac{N-\phi(N)}{N+1} 
= \\
\sim & 1-\left( 1-\frac{1}{p_1}\right)\cdot\ldots\cdot\left(1-\frac{1}{p_r}\right)
\end{align}
where we used $N\sim N+1$ and 
$\phi(N)=N\left( 1-\frac{1}{p_1}\right)\cdot\ldots\cdot\left(1-\frac{1}{p_r}\right)$.
\end{proof}
This probability tends to zero for large prime factors.

\vspace{5mm}

Let us notice that, in the Pell curve case, 
it is possible to avoid such situation,
by performing encryption and decryption in $\pell^*_{\ZZ_N}$, 
without exploiting the isomorphism operation. 
Here the group operation $\otimes$ 
is defined between two points on the Pell curve, 
as in Equation \ref{eq:pelloperation},
and does not contain the inverse operation.
In the resulting scheme the ciphertext is obtained as
$(C_x,C_y)=(M_x,M_y)^{\otimes e}$, where the operation $\otimes$ depends on $D$. 
Thus the triple $(C_x,C_y,D)$ must be transmitted, resulting in a 
non-compressed ciphertext.
\subsection{Recovering the message from \textit{D}}
To recover the message pair $(M_x, M_y)$ from 
$D = \frac{M_x^2 - 1}{M_y^2} \pmod N$, 
the attacker must solve the quadratic congruence
$M_x^2 - D M_y^2 - 1 = 0 \pmod N$
with respect to the two unknowns $M_x$ and $M_y$.
Even if one of the two coordinates is known (partially known plaintext attack),
it is well known that computing square roots modulo a composite integer $N$,
when the square root exists, is equivalent to factoring $N$ itself.
\subsection{Further comments}
As a conclusion to this section, we only mention that as shown in \cite{bellini2016},
RSA-like schemes based on isomorphism own the following properties:
they are more secure than RSA in the broadcast scenario,
they can be transformed to semantically secure schemes using standard techniques
which introduce randomness in the process of generating the ciphertext.
\section{Efficiency of the encryption scheme}
\label{sec:efficiency}
Recall that our scheme encrypts and decrypts messages of size $2\log N$.
To decrypt a ciphertext of size $2\log N$ using CRT, 
standard RSA requires four full exponentiation modulo $N/2$-bit primes. 
Basic algorithms to compute $x^d \mod p$ requires $O(\log d \log^2p)$, 
which is equal to $O(\log^3p)$ if $d \sim p$.\\
Using CRT, if $N=p_1^{e_1}\cdot \ldots \cdot p_r^{e_r}$, 
our scheme requires at most $r$ exponentiation modulo $N/r$-bit primes.\\
This means that the final speed up of our scheme with respect to RSA is
\begin{align}
 \frac{4 \cdot (N/2)^3}{r \cdot (N/r)^3} = r^2/2
\end{align}
When $r=2$ our scheme is two times faster than RSA, 
as it has already been shown in \cite{bellini2016}.
If $r=3$ our scheme is $4.5$ time faster, 
with $r=4$ is 8 times faster,
and with $r=5$ is $12.5$ times faster.
\section{Conclusions}
\label{sec:conclusion}
We generalized an RSA-like scheme based on the Pell hyperbola 
from a modulus that was a product of two primes to a generic modulus.
We showed that this generalization leads to a very fast decryption step, 
up to 12 times faster than original RSA 
for the security level of a modulus of 8192 bits. 
The scheme preserves all security properties of RSA-like schemes, 
which are in general more secure than RSA, especially in a broadcast scenario. 
Compared to similar schemes based on elliptic curves 
it is more efficient.
We also pointed that a variation of the scheme with non-compressed ciphertext 
does not suffer of impossible group operation attacks.
%
%
\section*{References}
\bibliography{biblio-critto}
\end{document}